\newcommand{\set}[1]{\left\{#1\right\}}
\newcommand{\sizeof}[1]{\left\lvert#1\right\rvert}
\newcommand{\OO}{\mathcal{O}}
\theoremstyle{plain}
\newtheorem{theorem}{Theorem}[]
\newtheorem{observation}[theorem]{Observation}
\newtheorem{lemma}[theorem]{Lemma}
\newtheorem{corollary}[theorem]{Corollary}
\theoremstyle{definition}
\newtheorem{definition}[theorem]{Definition}
\title{One-Way Trail Orientations}
\author{Anders Aamand}
\author{Niklas Hjuler \thanks{This work is supported by the Innovation Fund Denmark
through the DABAI project.}}
\author{Jacob Holm\thanks{This research is supported by Mikkel Thorup's Advanced Grant DFF-0602-02499B from the Danish Council for Independent Research under the Sapere Aude research career programme.}}
\author{\\Eva Rotenberg}
\affil{University of Copenhagen}
\date{}
\begin{document}
\maketitle

\begin{abstract}
Given a graph, does there exist an orientation of the edges such that the resulting directed graph is strongly connected? 
  Robbins' theorem [Robbins, Am. Math. Monthly, 1939] states that  such an orientation exists if and only if the graph is $2$-edge connected. A natural extension of this problem is the following: Suppose that the edges of the graph is partitioned into trails. Can we orient the trails such that the resulting directed graph is strongly connected? 
  
   We show that $2$-edge connectivity is again a sufficient condition and we provide a linear time algorithm for finding such an orientation, which is both optimal and the first polynomial time algorithm for deciding this problem.

  The generalised Robbins' theorem [Boesch, Am. Math. Monthly, 1980] for mixed multigraphs states that the undirected edges of a mixed multigraph can be oriented making the resulting directed graph strongly connected exactly when the mixed graph is connected and the underlying graph is bridgeless. We show that as long as all cuts have at least $2$ undirected edges or directed edges both ways, then there exists an orientation making the resulting directed graph strongly connected. This provides the first polynomial time algorithm for this problem and a very simple polynomial time algorithm to the previous problem.
\end{abstract}

\newpage

\section{Introduction and motivation}

Suppose that the mayor of a small town decides to make all the streets one-way in such a way that it is possible to get from any place to any other place without violating the orientations of the streets\footnote{The motivation for doing so is that the streets of the town are very narrow and thus it is a great hassle when two cars unexpectedly meet.}. If initially all the streets are two-way then Robbins' theorem \cite{Robbins39} asserts that this can be done exactly when the corresponding graph is $2$-edge connected. If, on the other hand some of the streets were already one-way in the beginning then the generalised Robbins' theorem \cite{Boesch80} states that it can be done exactly when the corresponding graph is strongly connected and the underlying graph is $2$-edge connected. 
 
However, the proofs of both of these results assume that every street of the city corresponds to exactly one edge in the graph. This assumption hardly holds in any city in the world and therefore a much more natural assumption is that every street corresponds to a trail in the graph and  that the edges of each trail must be oriented consistently\footnote{This version of the problem was given to us through personal communication with Professor Robert E. Tarjan}.
 
 In this paper we prove that Robbins' Theorem continues to hold even when the set of edges is partitioned into trails. In other words a necessary and sufficient condition for an orientation to exist is that the graph is $2$-edge connected. We also provide a linear time algorithm for finding such an orientation. 

Finally we will consider the generalised Robbins' theorem in this new setting i.e. we allow some edges to be oriented initially and suppose that the remaining edges are partitioned into trails. We will show that if any cut $(V_1,V_2)$ in the graph has either at least $2$ undirected edges going between $V_1$ to $V_2$  or a directed edge in each direction then it is possible to orient the trails making the resulting graph strongly connected. Although this condition is not necessary it does give a simple algorithm for deciding the problem. Indeed, the only cuts containing an undirected edge which we allow are the ones where this edge (and hence its trail) is \emph{forced} in one direction. Hence for deciding the problem we can start by orienting all the forced trails until there are no more forced trails. Then the trails can be oriented making the graph strongly connected exactly if the resulting graph satisfies our condition. 

Note that when some edges are initially oriented the answer to the problem depends on the trail decomposition which is not the case for the other results. That the condition from the generalised Robbins' theorem is not sufficient can be seen from figure \ref{fig:counter example}.

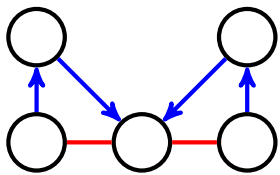
\begin{figure}[h!]
\centering
\begin{tikzpicture}[x=1.4cm, y=1.4cm]
  \begin{scope}[
      vertex/.style={
        ultra thick,
        circle,
        draw,
        fill=white,
        minimum size=0.75cm,
      },
      edge/.style={
        ultra thick,
      },
      undirected edge/.style={
        edge,
      },
      directed edge/.style={
        edge,
        ->,>=stealth',
      },
    ]
    \node[vertex] (a) at (0,0) {};
    \node[vertex] (b) at (1,0) {};
    \node[vertex] (c) at (2,0) {};
    \node[vertex] (d) at (0,1) {};
    \node[vertex] (e) at (2,1) {};

    \draw[undirected edge,red] (a) -- (b);
    \draw[undirected edge,red] (b) -- (c);
    \draw[directed edge,blue] (a) -- (d);
    \draw[directed edge,blue] (d) -- (b);
    \draw[directed edge,blue] (c) -- (e);
    \draw[directed edge,blue] (e) -- (b);
  \end{scope}
\end{tikzpicture}
\caption{The graph is strongly connected and the underlying graph is $2$-edge connected, but no matter the orientation of the red trail, the graph will lose its strong connectivity}
\label{fig:counter example}
\end{figure}
\paragraph{Earlier methods}
Several methods have already been applied for solving orientation problems in graphs where the goal is to make the resulting graph strongly connected. 

One approach used by Robbins \cite{Robbins39} is to use that a $2$-edge connected graph has an \emph{ear-decomposition}. An ear decomposition of a graph is a partition of the set of edges into a cycle $C$ and paths $P_1,\dots,P_t$ such that $P_i$ has its two endpoints but none of its internal vertices on $C\cup\left(\bigcup_{j=1}^{i-1}P_j\right)$. Assuming the existence of an ear decomposition of $2$-edge connected graphs it is easy to prove Robbins' theorem. Indeed, it is easy to see by induction that any consistent orientations of the paths and the cycle give a strongly connected graph.

A second approach introduced by Tarjan \cite{Hopcroft:1973:AEA:362248.362272} gives another simple proof of Robbins' theorem. One can make a DFS tree in the graph rooted at a vertex $v$ and orient all edges in the DFS tree away from $v$. The remaining edges are oriented towards $v$ and if the graph is $2$-edge connected it is easily verified that this gives a strong orientation.

A similar approach was used by Chung et al. \cite{Chung:1985} in the context of the generalized Robbins theorem for mixed multigraphs.

The above methods not only prove Robbins' theorem, they also provide linear time algorithms for finding strong orientations of undirected or mixed multigraphs.

However, none of the above methods have proven fruitful in our case. In case of the ear decomposition one needs a such which is somehow compatible with the partitioning into trails and this seems hard to guarantee. The original proof by Roberts is essentially similar to using the ear decomposition.  Similar problems appear when trying a DFS-approach. Neither does the proof by Boesch \cite{Boesch80} of Robbins' theorem for mixed multigraphs generalise to prove our result. Most importantly the corresponding theorem is no longer true for trail orientations as is shown by the example above.

Since the classical linear time algorithms rely on ear-decompositions and DFS searches, and since these approaches do not immediately work for trail partitions, our linear time algorithm will be a completely new approach to solving orientation problems.

\section{Preliminaries}

Let us briefly review the concepts from graph theory that we will need.
Recall that a \emph{walk} in a graph is an alternating sequence of vertices and edges $v_0,e_1,v_1,e_2,\dots,v_k$, such that for $1\leq i \leq k$ the edge $e_i$ has $v_{i-1}$ and $v_i$ as its two endpoints. In a directed or mixed graph the ordering of the endpoints of each edge in the sequence must be consistent with the direction of the edge in case it is oriented. A \emph{trail} is a walk without repeated edges. A \emph{path} is a trail without repeated vertices (except possibly $v_0=v_k$). Finally a \emph{cycle} is a path for which $v_0=v_k$

Next, recall that a mixed multigraph $G=(V,E)$ is called \emph{strongly connected} if for any vertices $u,v\in V$ there exists a walk from $u$ to $v$. In case that the graph contains no directed edges this is equivalent to saying that it consists of exactly one connected component. 

We also recall the definition of $k$-edge connectivity. A graph $G=(V,E)$ is said to be \emph{$k$-edge connected} if and only if  $G'=(V,E-X)$ is connected for all $X \subseteq E$ where $|X|<k$. A trivially equivalent condition is that any edge-cut $(V_1,V_2)$ in the graph has at least $k$ edges going between $V_1$ and $V_2$. 

Finally, if $G=(V,E)$ is a mixed multigraph and $A\subseteq V$ we define $G/A$ to be that graph obtained by contracting $A$ to a single vertex and $G[A]$ to be the subgraph of $G$ induced by $A$. The following simple observation will be used repeatedly in this paper.
\begin{observation}
  If $G=(V,E)$ is $k$-edge connected and $A\subseteq V$ then $G/A$ is $k$-edge connected. Also if $G$ is a strongly connected mixed multigraph then $G/A$ is too.
\end{observation}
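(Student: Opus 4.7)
The plan is to handle both claims through one structural fact about the contraction $G\mapsto G/A$. Writing $a$ for the single vertex of $G/A$ obtained by merging $A$, the edges of $G/A$ are in natural bijection with the edges of $G$ having at least one endpoint outside $A$, with edges internal to $A$ becoming self-loops at $a$ which may be discarded (self-loops affect neither cut size nor strong connectivity, and indeed are usually suppressed by the contraction convention).

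For the edge-connectivity part, I would show that every cut of $G/A$ lifts to a cut of $G$ with the same set of crossing edges. Given a bipartition $(V_1',V_2')$ of $V(G/A)$, assume without loss of generality that $a\in V_1'$, and set $V_1 = (V_1'\setminus\{a\})\cup A$ and $V_2 = V_2'$. Then $(V_1,V_2)$ is a bipartition of $V$, and under the bijection above, the edges of $G$ crossing $(V_1,V_2)$ are exactly the edges of $G/A$ crossing $(V_1',V_2')$. Since $G$ is $k$-edge connected, there are at least $k$ such edges, so every cut of $G/A$ has size at least $k$.

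For strong connectivity, I would observe that any directed walk in $G$ projects to a directed walk in $G/A$ by replacing each occurrence of a vertex of $A$ by $a$ and deleting the (now looping) edges internal to $A$. Given $u',v'\in V(G/A)$, choose preimages $u,v\in V$ (any vertex of $A$ if the corresponding endpoint is $a$), apply strong connectivity of $G$ to obtain a directed walk from $u$ to $v$, and project to produce the required walk in $G/A$.

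The only real obstacle is notational: being careful about the status of self-loops created by the contraction and checking that the projected sequence really is a walk (consecutive repetitions of $a$ are harmless and can be shortened). Both issues are immediate once one commits to the bijection above, so the observation is essentially a bookkeeping statement of the fact that contraction preserves crossing edges of cuts and composability of walks.
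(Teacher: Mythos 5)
Your proof is correct. The paper states this observation without proof (it is offered as a ``simple observation''), and your argument --- lifting each cut of $G/A$ to a cut of $G$ with the same crossing edges, and projecting directed walks of $G$ to directed walks of $G/A$ --- is exactly the reasoning the authors implicitly rely on (they later remark that ``the cuts in a contracted graph are also cuts in the original graph''), so there is nothing to add.
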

The structure of this paper is as follows. In section \ref{sec:rob} we prove our generalisation of Robbins' theorem for undirected graphs partitioned into trails. In section \ref{sec:mix} we study what happens in the case of mixed graphs. Finally in section \ref{sec:lin} we provide our linear time algorithm for trail orientation in an undirected graph.

\section{Robbins Theorem Revisited}\label{sec:rob}
We are now ready to state our generalisation of Robbins' theorem. 
\begin{theorem}\label{thm:exist}
Let $G=(V,E)$ be a multigraph with $E$ partitioned into trails. An orientation of each trail such that the resulting directed graph is strongly connected exists if and only if $G$ is $2$-edge connected.
\end{theorem}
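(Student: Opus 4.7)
The ``only if'' direction is immediate and does not depend on the trail partition: if $e$ is a bridge of $G$, then in any orientation $e$ points one way, so no directed path crosses the corresponding cut in the opposite direction, and $G$ cannot be strongly connected.

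For the ``if'' direction, I would attempt induction on $|E|$, with a single cycle as the trivial base case. The easiest reduction to handle in the inductive step is when some trail $T$ revisits a vertex $v$. The portion of $T$ between two visits to $v$ is a closed sub-trail and hence contains a cycle $C$ whose edges all lie inside $T$. By the observation in the preliminaries, $G/C$ is still $2$-edge connected; the other trails restrict to valid trails in $G/C$, and $T$ restricts to a strictly shorter valid trail. The induction hypothesis provides a strong orientation of $G/C$, which fixes the orientation of $T$ outside $C$; the orientation of the edges of $C$ is then forced by the direction in which $T$ traverses $C$. The lifted orientation is strongly connected because $C$ becomes a directed cycle sharing a vertex with the strongly connected $G/C$. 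The same reduction applies whenever some trail is already closed.

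The principal obstacle is the case in which every trail is a simple path, so that no cycle lives entirely inside one trail. The natural auxiliary multigraph $G'$ with one edge per trail, connecting its two endpoints, need \emph{not} be $2$-edge connected, since a single trail can cross a cut of $G$ many times, so Robbins' theorem cannot be invoked on $G'$ directly. My plan here would be to look for a cycle $C$ of $G$ that is \emph{compatible} with the trail partition, in the sense that each trail contributes at most one maximal contiguous arc to $C$; orienting $C$ along a chosen cyclic direction then orients a prefix and a suffix of each participating trail consistently, and the contraction $G/C$ is a smaller instance whose trails are still valid, allowing an appeal to induction. Establishing the existence of such a compatible cycle under the sole assumption of $2$-edge connectivity is, I expect, the crux of the theorem; the standard tools (ear decomposition, DFS tree, Mader-style splitting-off) do not naturally produce cycles with this compatibility property, which is consistent with the authors' remark that a genuinely new idea is needed.
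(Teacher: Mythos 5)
Your ``only if'' direction is fine, and your first reduction (a trail that revisits a vertex) is essentially sound once you take a \emph{minimal} closed subtrail, so that the cycle $C$ is a contiguous arc of $T$; as written, a cycle merely contained in the closed subtrail could be traversed by $T$ in rotationally inconsistent directions, and then orienting its edges along $T$ does not make $C$ a directed cycle. But this is a side issue: the proof has a genuine gap exactly where you say it does. The case in which every trail is a simple path is the whole theorem, and you neither establish the existence of a ``compatible'' cycle nor explain how the induction would go through if one existed --- note that even with such a cycle, the orientation of each participating trail is already fixed by the chosen direction of $C$, so the inductive call on $G/C$ would have to produce an orientation agreeing with all of these pre-commitments simultaneously, which a bare appeal to the induction hypothesis does not give you. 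So the proposal is an honest partial attempt, not a proof.

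For comparison, the paper's argument avoids cycles and contractions of cycles altogether. It inducts on $\sizeof{E}$ and picks an arbitrary edge $e$ that is at an \emph{end} of its trail, so that deleting $e$ leaves every trail intact (possibly shortening one). If $G-e$ is still $2$-edge connected, the induction hypothesis applies directly and the orientation extends to $e$. Otherwise $G-e$ has a bridge $b$, so $\set{e,b}$ is a $2$-edge cut separating $V_1$ from $V_2$; one forms $G_i=G[V_i]\cup\set{(u_i,w_i)}$ for $i\in\set{1,2}$, splicing the trail fragment of $e$ inside $V_i$, the new edge $(u_i,w_i)$, and the trail fragment of $b$ inside $V_i$ into a single trail. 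Each $G_i$ is a contraction of $G$, hence $2$-edge connected, and has fewer edges; after possibly flipping one of the two inductively obtained orientations so that the new edges point as $(u_1,w_1)$ and $(w_2,u_2)$, orienting $e$ as $(u_1,u_2)$ and $b$ as $(w_2,w_1)$ yields a strong orientation of $G$. The global flip of an entire side is the move that replaces your search for a compatible cycle: it is the only degree of freedom needed to reconcile the two subproblems, and it is always available because reversing every edge of a strong trail orientation yields another one.
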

\begin{proof}
If  $G$ is not $2$-edge connected, such an orientation obviously doesn't exist so we need to prove the converse.  Suppose therefore that $G$ is $2$-edge connected.

Our proof is by induction on the number of edges in $G$.  If there are no edges, the graph is a single vertex, and the statement is obviously true.  Assume now the statement holds for all graphs with strictly fewer edges than $G$.  Pick an arbitrary edge $e$ that is at the end of its corresponding trail.

If $G-e$ is $2$-edge connected, then by the induction hypothesis there is a strong orientation of $G-e$ that respects the trails of $G$.  Such an orientation clearly extends to the required orientation of $G$.

\begin{figure}[h!]
\centering
\begin{tikzpicture}[x=0.5cm,y=0.5cm,scale=0.75]
  \begin{scope}[
      every path/.style={
              },
      every node/.style={
                text=black,
        inner sep=1pt,
      },
            vertex set/.style={
        dashed,
      },
            vertex/.style={
        draw,
        circle,
        fill=white,
        minimum size=2mm,
        inner sep=0pt,
        outer sep=0pt,
      },
            edge/.style={blue,thick},
      undirected edge/.style={edge},
      directed edge/.style={edge,->,>=stealth'},
    ]
    \begin{scope}
      \draw[vertex set] (0,0) node (V1) {$V_1$} ellipse (3 and 4);
      \draw[vertex set] (12,0) node (V2) {$V_2$} ellipse (3 and 4);
      \node[vertex,label={60:$u_1$}] (e1) at ($(V1)+(2,3)$) {};
      \node[vertex,label={-60:$w_1$}] (b1) at ($(V1)+(2,-3)$) {};
      \node[vertex,label={120:$u_2$}] (e2) at ($(V2)+(-2,3)$) {};
      \node[vertex,label={-120:$w_2$}] (b2) at ($(V2)+(-2,-3)$) {};
      \draw[undirected edge] (e1) -- (e2) node[midway,label={above:$e$}] {};
      \draw[undirected edge] (b1) -- (b2) node[midway,label={below:$b$}] {};
    \end{scope}

    \begin{scope}[
                shift={(24,0)},
      ]
      \draw[vertex set] (0,0) node (V1) {$V_1$} ellipse (3 and 4);
      \draw[vertex set] (12,0) node (V2) {$V_2$} ellipse (3 and 4);
      \node[vertex,label={60:$u_1$}] (e1) at ($(V1)+(2,3)$) {};
      \node[vertex,label={-60:$w_1$}] (b1) at ($(V1)+(2,-3)$) {};
      \node[vertex,label={120:$u_2$}] (e2) at ($(V2)+(-2,3)$) {};
      \node[vertex,label={-120:$w_2$}] (b2) at ($(V2)+(-2,-3)$) {};
      \draw[directed edge] (e1)
      .. controls ($(V1)+(6.25,1.25)$) and ($(V1)+(6.35,-1.25)$)
      .. (b1);
      \draw[directed edge] (b2)
      .. controls ($(V2)+(-6.25,-1.25)$) and ($(V2)+(-6.25,1.25)$)
      .. (e2);
    \end{scope}

  \end{scope}
\end{tikzpicture}
\caption{A two edge cut and the two graphs $G_1$ and $G_2$. }
\label{fig:Two edge cut}
\end{figure}
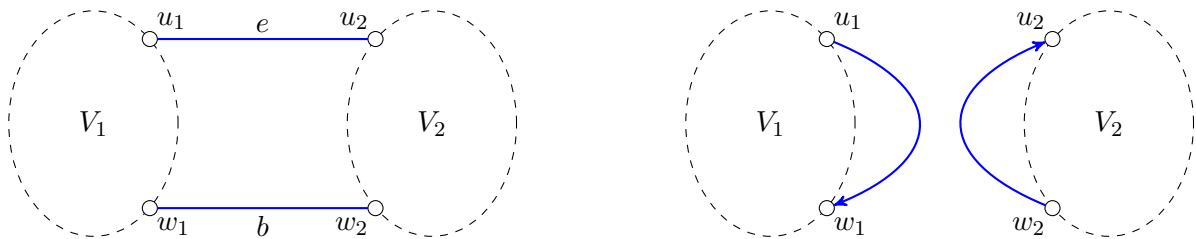

If $G-e$ is not $2$-edge connected, there exists a bridge $b$ in $G-e$ (see figure \ref{fig:Two edge cut}).  Let $V_1$,~$V_2$ be the two connected components of $G-\set{e,b}$, and let $e=(u_1,u_2)$ and $b=(w_1,w_2)$ such that for $i\in\set{1,2}$, $u_i,w_i\in V_i$ (note that we don't necessarily have that $u_i$ and $w_i$ are distinct for $i\in \{1,2\}$).  Now for $i\in\set{1,2}$ construct the graph $G_i = G[V_i]\cup\set{(u_i,w_i)}$, and define the trails in $G_i$ to be the trails of $G$ that are completely contained in $G_i$, together with a single trail combined from the (possibly empty) partial trail of $e$ contained in $G_i$ and ending at $e_i$, followed by the edge $(u_i,w_i)$, followed by the (possibly empty) partial trail of $b$ contained in $G_i$ starting at $b_i$.  Both $G_1$ and $G_2$ are $2$-edge connected since they can each be obtained as a contraction of $G$. Furthermore, they each have strictly fewer edges than $G$, so inductively each has a strong orientation that respects the given trails.  Further, we can assume that the orientations are such that the new edges are oriented as $(u_1,w_1)$ and $(w_2,u_2)$ by flipping the orientation of all edges in either graph if necessary.  We claim that this orientation, together with $e$ oriented as $(u_1,u_2)$ and $b$ oriented as $(w_2,w_1)$, is the required orientation of $G$.  To see this first note that (by construction) this orientation respects the trails.  
Secondly suppose $v_1\in V_1$ and $v_2\in V_2$ are arbitrary. Since $G_1$ is strongly connected $G[V_1]$ contains a directed path from $v_1$ to $u_1$. Similarly, $G[V_2]$ contains a directed path from $u_2$ to $v_2$. Thus $G$ contains a directed path from $v_1$ to $v_2$. A similar argument gives a directed path from $v_2$ to $v_1$ and since $v_1$ and $v_2$ were arbitrary this proves that $G$ is strongly connected and our induction is complete.

\end{proof}

The construction in the proof can be interpreted as a naive algorithm for finding the required orientation when it exists.
\begin{corollary}
  The one-way trail orientation problem on a graph with $n$ vertices and $m$ edges can be solved in $\OO(n+m\cdot f(m,n))$ time, where $f(m,n)$ is the time per operation for fully dynamic bridge finding (a.k.a. $2$-edge connectivity).
\end{corollary}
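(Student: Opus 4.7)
The plan is to run the inductive construction of Theorem~\ref{thm:exist} as an explicit algorithm driven by a single fully dynamic bridge-finding data structure $D$. Each trail is stored as a doubly-linked list of edges (so its two end-edges are accessible in $\OO(1)$), a queue of available end-edges is maintained, and $D$ is initialised by inserting every edge of $G$. This setup costs $\OO(n+m\cdot f(m,n))$.

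The main loop repeatedly extracts an end-edge $e$ from the queue, deletes $e$ from $D$, and queries $D$ for a bridge. If none is reported, we are in Case~1 of the proof: push $e$ on a decision stack, shorten its trail, and continue. Otherwise $D$ reports a bridge $b$ together with its endpoints $w_1\in V_1$, $w_2\in V_2$ (the sides of the cut being accessible via connectivity queries in $D$), and we are in Case~2: push the record $(e,b,V_1,V_2)$ on the decision stack, delete $b$ from $D$, and insert the auxiliary edges $(u_1,w_1)$ and $(u_2,w_2)$. Since $V_1$ and $V_2$ are disjoint and the auxiliary edges lie inside the respective sides, after these updates $D$ represents exactly $G_1\sqcup G_2$; the trail lists are updated in $\OO(1)$ as prescribed in the proof of Theorem~\ref{thm:exist}, and any new end-edges are enqueued.

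For the accounting, let $\Phi$ denote the number of edges currently in $D$ plus the number of auxiliary edges ever inserted. A Case~1 iteration deletes one edge from $D$, so $\Phi$ drops by one; a Case~2 iteration deletes the two original edges $e,b$ from $D$ and inserts at most two auxiliary edges, so $\Phi$ is nonincreasing and in fact strictly decreases (since $e,b$ are not themselves auxiliary). Because $\Phi$ starts at $m$ and is always nonnegative, the loop runs $\OO(m)$ times and performs $\OO(m)$ operations on $D$, contributing $\OO(m\cdot f(m,n))$ to the total cost.

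The orientation is produced by popping the decision stack: for a Case~1 record, orient $e$ so that its trail continues consistently (any direction preserves the strong connectivity already established in the smaller subproblem); for a Case~2 record, combine the orientations of the two subproblems exactly as in the proof of Theorem~\ref{thm:exist}, flipping one side if needed so that the auxiliary edges agree, and then orient $e$ and $b$. This post-processing touches every edge $\OO(1)$ times and costs $\OO(n+m)$, for a total of $\OO(n+m\cdot f(m,n))$. The main subtlety is that $D$ must report not merely the existence of a bridge but an actual bridge edge, and must expose which side of the resulting cut each vertex belongs to; standard fully dynamic 2-edge-connectivity structures provide exactly this, so the cost per Case~2 iteration remains $\OO(f(m,n))$.
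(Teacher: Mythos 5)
Your overall plan is the one the paper intends: the corollary is given with no explicit proof beyond the remark that the construction in Theorem~\ref{thm:exist} ``can be interpreted as a naive algorithm'', and running that construction on top of a single fully dynamic bridge-finding structure is the right reading. However, your accounting step fails as written. With $\Phi$ defined as (edges currently in $D$) plus (auxiliary edges ever inserted), a Case~2 iteration deletes $e$ and $b$ and inserts up to two auxiliary edges, so the first term is unchanged while the second grows by two: $\Phi$ \emph{increases} by $2$ rather than decreasing. (The parenthetical ``$e,b$ are not themselves auxiliary'' is also false in general: auxiliary edges are spliced into trails and can later be selected as the end-edge $e$, or turn out to be the bridge $b$, of a deeper split.) As stated, your argument therefore does not bound the number of iterations or of data structure operations.

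The bound is nevertheless correct and the fix is short: every Case~2 iteration splits one current component into the two nonempty pieces $V_1$ and $V_2$ of $G-\set{e,b}$, so the number of components strictly increases; since it starts at $1$ and never exceeds $n$, there are at most $n-1$ Case~2 iterations, hence at most $2(n-1)$ auxiliary insertions and at most $m+2(n-1)$ edges ever present in $D$. As every iteration permanently deletes at least one edge from $D$, the loop runs $\OO(m+n)=\OO(m)$ times and performs $\OO(m)$ operations on $D$. One further point deserves a sentence: when popping a Case~2 record you may need to reverse the orientation of an entire side, and doing this eagerly can exceed $\OO(n+m)$ total work; either always flip the smaller side, or (cleaner) record a flip bit at each split and resolve all edge orientations in a single top-down pass over the recursion tree at the end, which costs $\OO(n+m)$.
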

At the time of this writing\footnote{Separate paper submitted to SODA'18 by Holm, Rotenberg and Thorup.}, this is $\OO(n+m(\log n\log\log n)^2)$.  In Section~\ref{sec:lin} we will show a less naive algorithm that runs in linear time.

\section{Extension to Mixed graphs}\label{sec:mix}
Now we will extend our result to the case of mixed graphs. We are going to prove the following.
\begin{theorem}\label{thm:mixed}
  Let $G=(V,E)$ be a strongly connected mixed multigraph. Then $G-e$ is strongly connected for all undirected $e\in E$ if and only if for any partition $\mathcal{P}$ of the undirected edges of $G$ into trails, and any $T\in\mathcal{P}$, any orientation of $T$ can be extended to a strong trail orientation of $(G,\mathcal{P})$.
\end{theorem}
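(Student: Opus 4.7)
The plan is to prove each direction of the biconditional separately. The forward direction ($\Leftarrow$) is straightforward, while the reverse direction ($\Rightarrow$) is the main content and I would attack it by an induction mirroring the proof of Theorem~\ref{thm:exist}.

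For $(\Leftarrow)$, fix an undirected edge $e=\{u,v\}$ and choose a partition in which $e$ is its own singleton trail. By hypothesis both orientations of $e$ extend to strong orientations of $G$. In the strong orientation that sets $e=u\to v$, any walk from $v$ to $u$ can be truncated just before its first arrival at $u$ to obtain a walk that never uses $e$ (every use of $e$ must begin at $u$); this gives a walk $v\to u$ in $G-e$. Symmetrically, the other extension gives $u\to v$ in $G-e$. For any pair $x,y$, pick an $x\to y$ walk in $G$ and reroute each traversal of $e$ through one of these two detours; this yields a walk in $G-e$, showing $G-e$ is strongly connected.

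For $(\Rightarrow)$ I would induct on the total number of edges of $G$. Let $e$ be the last edge of the prescribed oriented trail $T=v_0\to\dots\to v_k$, so $e=(v_{k-1},v_k)$ with orientation already fixed. In the easy case, $G-e$ still satisfies the hypothesis of the theorem; then replacing $T$ by $T-e$ in the partition, the inductive hypothesis supplies a strong trail orientation of $G-e$ extending the prescribed orientation of $T-e$, and putting back the directed edge $e=v_{k-1}\to v_k$ preserves strong connectivity. In the hard case, $G-e$ is still strong (by the hypothesis on $G$) but $G-e-e'$ fails to be strong for some other undirected $e'$. A cut argument in the spirit of Theorem~\ref{thm:exist} pinpoints a bipartition $(V_1,V_2)$ whose cut in $G$ consists of exactly the two undirected edges $e,e'$ together with some directed edges all crossing in the same direction (say $V_2\to V_1$); this is the ``tight'' case of the hypothesis on $G$. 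Following the splitting strategy of Theorem~\ref{thm:exist}, I form $G_1=G/V_2$ and $G_2=G/V_1$ by contracting the opposite side to a single vertex $*$. Each $G_i$ is strongly connected, and for every undirected edge $f$ in $G_i$ one has $G_i-f=(G-f)/V_{3-i}$, a contraction of a strongly connected graph and thus strongly connected by the contraction observation; so each $G_i$ still satisfies the hypothesis of the theorem. The partial trails of $T$ and of the trail containing $e'$ on each side of the cut are spliced through the two new undirected edges at $*$ into a single trail $T_i^{\ast}$ in $G_i$, with the orientations of $T_1^{\ast}$ and $T_2^{\ast}$ chosen to agree on $e,e'$ and to extend the prescribed orientation of $T$. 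Applying the inductive hypothesis to each $(G_i,T_i^{\ast})$ and reassembling with the forced orientations of $e,e'$ yields the required strong orientation of $G$.

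The step I expect to be the main obstacle is ensuring that the induction strictly decreases the measure in the hard case. Contracting $V_2$ strictly reduces the total edge count only when $G[V_2]$ contains at least one edge, so the argument needs additional care when a side of the cut contains no internal edges (for instance when that side is a single vertex). I would handle these degenerate configurations either by refining the induction measure to a lexicographic order on $(|V|,|E|)$, or by dispatching them as direct base cases, where the constraints on $e'$ (forced when no directed edge crosses the cut, otherwise free) can be resolved by inspection.
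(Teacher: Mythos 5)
Your argument that the extension property forces each $G-e$ to be strongly connected is fine, and the skeleton of your main direction (locate a tight cut with exactly two undirected edges and all directed edges one way, contract each side, splice the two crossing trail fragments at the contracted vertex, recurse, and recombine via Lemma~\ref{lem:GcontractAB}) is exactly the paper's. The genuine gap is where you suspect it, and neither of your proposed repairs closes it. If one side of the cut, say $V_2$, is a single vertex $w$, then $G_1=G/V_2$ is literally $G$ again, so neither $|E|$ nor the lexicographic pair $(|V|,|E|)$ decreases; and this is not a configuration that can be ``resolved by inspection'', because the small object is $G_2=G/V_1$ (a two-vertex graph), while the recursive call you cannot afford is on $G_1=G$, whose $V_1$-side is an arbitrary graph. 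Worse, since you only require $e'\neq e$, the edge $e'$ may lie on $T$ itself; if moreover $V_2=\set{v_{k-1}}$ is the single internal vertex of $T$ between $e'$ and $e$, then $G_1=G$ with the identical trail partition and the identical prescribed orientation, and the recursion loops forever. So the termination problem is not a degenerate technicality to be patched at the end; it is the reason the paper organizes the induction differently.

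The paper's fix has two interacting parts. First, it inducts on $\sizeof{\mathcal{P}}$, orienting an entire trail at a time rather than peeling single edges, so the case split is not ``is some $G-e-e'$ disconnected?'' but ``is there a cut that $T$ crosses \emph{exactly once}, with exactly one other undirected edge and all directed edges one way?''. Second, if no such cut exists, a short alternation argument (consecutive crossings of a cut by a trail alternate in direction, so a cut whose crossing $T$-edges all point one way after orienting $T$ is crossed by $T$ at most once) shows that \emph{any} orientation of all of $T$ leaves every remaining undirected edge unforced, and a whole trail is discharged; if such a cut exists, the ``exactly once'' condition guarantees the second undirected edge lies on a \emph{different} trail $T'$, so after splicing, two trails merge into one and $\sizeof{\mathcal{P}}$ strictly drops on both sides of the cut --- even when one side is a single vertex. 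Your construction becomes correct if you adopt this measure and restrict attention to cuts crossed exactly once by $T$; as written, the induction does not terminate.
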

Suppose $G=(V,E)$ is as in the theorem. We will say that $e\in E$ is \emph{forced} if it is undirected and satisfies that $G-e$ is not strongly connected\footnote{This terminology is natural since it is equivalent to saying that there exists a cut $(V_1,V_2)$ in $G$ such that $e$ is the only undirected edge in this cut and such that all the directed edges go from $V_1$ to $V_2$. If one wants an orientation of the trails making the graph strongly connected we are clearly forced to orient $e$ from $V_2$ to $V_1$.}.
Note that this is a proper extension of Theorem \ref{thm:exist} since if $G$ is undirected and $2$-edge connected then no $e\in E$ is forced.

For proving the result we'll need the following lemma.
\begin{lemma}\label{lem:GcontractAB}
  Let $G$ be a directed graph, and let $(A,B)$ be a cut with exactly one edge crossing from $A$ to $B$ and at least one edge crossing from $B$ to $A$.  Then $G$ is strongly connected if and only if $G/A$ and $G/B$ are.
\end{lemma}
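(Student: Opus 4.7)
\textbf{Forward direction} is immediate: $G/A$ and $G/B$ are both contractions of $G$, so by the earlier observation (contractions preserve strong connectivity), they are strongly connected whenever $G$ is.

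\textbf{Converse direction.} I plan to argue by contradiction. Suppose $G/A$ and $G/B$ are strongly connected but $G$ is not; pick a non-trivial cut $(C,D)$ in $G$ with no edges crossing from $C$ to $D$. The plan is to project $(C,D)$ to a cut in $G/A$ or $G/B$ that again has no crossings in one direction, contradicting strong connectivity of that quotient.

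The observation powering the projection is: since $e=(a,b)$ is the unique $A\to B$ edge and no edge crosses from $C$ to $D$, we must have $a\in C\Rightarrow b\in C$. Together with the absence of $C\to D$ edges internal to $A$ or internal to $B$, this constrains how the contracted vertex $A^{*}$ or $B^{*}$ can be placed in the projected cut. Write $C_A=C\cap A$, $D_A=D\cap A$, $C_B=C\cap B$, $D_B=D\cap B$. If $A$ lies entirely on one side of $(C,D)$, I project to $G/A$ and put $A^{*}$ on the same side as $A$; the only potentially dangerous crossings in $G/A$ come either from $e$ (when $A^{*}$ is on the $C$-side, but then $b\in C$) or from $B\to A$ edges out of $C_B$ (when $A^{*}$ is on the $D$-side, but then all such edges would be $C\to D$), and both are ruled out. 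The symmetric situation where $B$ lies entirely on one side is handled via $G/B$. Otherwise all four pieces $C_A, D_A, C_B, D_B$ are non-empty, and I use the position of $a$: if $a\in C$, project to $G/A$ with cut $(C_B\cup\{A^{*}\}, D_B)$, whose crossings are forbidden internal $B$-edges plus a possible copy of $e$, which goes to $b\in C_B$; if $a\in D$, project to $G/B$ with cut $(C_A, D_A\cup\{B^{*}\})$, whose crossings are forbidden internal $A$-edges plus a possible copy of $e$, which starts outside $C_A$. In every case the projected cut has zero crossings in one direction, contradicting strong connectivity of the quotient.

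\textbf{Main obstacle.} The delicate ingredient is matching the asymmetry of the hypothesis---a single $A\to B$ edge but possibly many $B\to A$ edges---with the right placement of the contracted vertex on one side of the projected cut. A symmetric choice fails: $B\to A$ edges can ``leak'' across the projected cut in the wrong quotient. The content of the lemma is really that uniqueness of the $A\to B$ edge, combined with the forbidden internal crossings, is just enough to kill every crossing in at least one of the four candidate projections.
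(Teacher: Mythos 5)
Your argument is correct, but it takes a genuinely different route from the paper's. You use the cut characterisation of strong connectivity: assuming $G$ has a non-trivial cut $(C,D)$ with no $C\to D$ edge, you push that cut down to $G/A$ or $G/B$, choosing the quotient and the side that receives the contracted vertex according to whether $A$ (or $B$) lies wholly on one side and, in the remaining case, according to where the tail $a$ of the unique $A\to B$ edge $e$ sits. I checked the resulting sub-cases; in each one the projected cut is non-trivial (this is exactly why the case split on which of $C\cap A$, $D\cap A$, $C\cap B$, $D\cap B$ is empty is needed) and has no forward crossings, since every candidate crossing is either an edge internal to $A$ or $B$ going from $C$ to $D$, a copy of $e$ ruled out by the implication ``$a\in C\Rightarrow b\in C$'', or a $B\to A$ edge that would itself be a $C\to D$ edge. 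The hypothesis that some edge crosses from $B$ to $A$ is never invoked explicitly, which is fine: without it $G/A$ could not be strongly connected anyway. The paper instead argues directly by lifting paths: since the contracted vertex of $G/A$ has a unique outgoing edge, ending at $b_1$, every vertex of $B$ is reachable from $b_1$ inside $B$; this lets one simulate in $G$ every passage of a $G/B$-path through the contracted vertex, showing any two vertices of $A$ are mutually reachable in $G$, and symmetrically for $B$. The paper's proof is shorter and constructive, exhibiting the connecting walks; yours trades that for a mechanical but fully dual certificate argument. Both are valid.
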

\begin{proof}
  Strong connectivity is preserved by contractions, so if $G$ is strongly connected then $G/A$ and $G/B$ both are.  For the other direction, let $(a_1,b_1)$ be the edge going from $A$ to $B$, and let $(b_2,a_2)$ be any edge from $B$ to $A$.  Since $G/A$ is strongly connected and $(a_1,b_1)$ is the only edge from $A$ to $B$, $G/A$ contains a path from $b_1$ to $b_2$ that stays in $B$.  Since this holds for any edge going from $B$ to $A$, and since $G/B$ is strongly connected, $A$ is strongly connected in $G$.  By a symmetric argument, $B$ is also strongly connected in $G$ and since the cut has edges in both directions, $G$ must be strongly connected.
\end{proof}
Now we provide the proof of Theorem~\ref{thm:mixed}. 
\begin{proof}[Proof of theorem~\ref{thm:mixed}]
  If $G-e$ is not strongly connected, the trail $T$ containing $e$ can at most be directed one way since $e$ is forced, so there is an orientation of $T$ which not extend to a strong trail orientation of $(G,\mathcal{P})$. To prove the converse suppose $G-e$ is strongly connected for all undirected $e\in E$.

  The proof is by induction on $\sizeof{\mathcal{P}}$.  If $\sizeof{\mathcal{P}}\leq 1$ the result is trivial.  So suppose $\sizeof{\mathcal{P}}>1$ and that the theorem holds for all $(G',\mathcal{P}')$ with $\sizeof{\mathcal{P}'}<\sizeof{\mathcal{P}}$.

  Consider a trail $T\in\mathcal{P}$.  Suppose there is no cut $(A,B)$ that $T$ crosses exactly once, which has exactly one other undirected edge crossing it, and has every directed edge crossing it going from $A$ to $B$.  Then regardless of the orientation of $T$, the resulting graph $G'$ has no undirected edge $e$ such that $G-e$ is not strongly connected.  Thus, by induction $(G',\mathcal{P}\setminus\set{T})$ has a strong trail orientation, which is also a strong trail orientation of $(G,\mathcal{P})$, as desired.
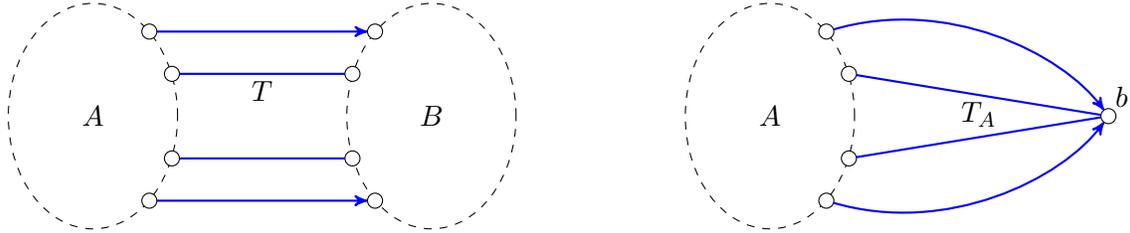
\begin{figure}[h!]
\centering
\begin{tikzpicture}[x=0.5cm,y=0.5cm,scale=0.75]
  \begin{scope}[
      every path/.style={
              },
      every node/.style={
                text=black,
        inner sep=1pt,
      },
            vertex set/.style={
        dashed,
      },
            vertex/.style={
        draw,
        circle,
        fill=white,
        minimum size=2mm,
        inner sep=0pt,
        outer sep=0pt,
      },
            edge/.style={blue,thick},
      undirected edge/.style={edge},
      directed edge/.style={edge,->,>=stealth'},
    ]
    \begin{scope}
      \draw[vertex set] (0,0) node (V1) {$A$} ellipse (3 and 4);
      \draw[vertex set] (12,0) node (V2) {$B$} ellipse (3 and 4);
      \node[vertex,label={60:$ $}] (e1) at ($(V1)+(2,3)$) {};
      \node[vertex,label={-60:$ $}] (b1) at ($(V1)+(2,-3)$) {};
      \node[vertex,label={120:$ $}] (e2) at ($(V2)+(-2,3)$) {};
      \node[vertex,label={-120:$ $}] (b2) at ($(V2)+(-2,-3)$) {};
      \node[vertex,label={80:$ $}] (d1) at ($(V1)+(2.8,1.5)$) {};
      \node[vertex,label={-80:$ $}] (d2) at ($(V1)+(2.8,-1.5)$) {};
      \node[vertex,label={-80:$ $}] (d'1) at ($(V2)+(-2.8,1.5)$) {};
      \node[vertex,label={80:$ $}] (d'2) at ($(V2)+(-2.8,-1.5)$) {};
      \draw[directed edge] (e1) -- (e2) node[midway,label={above:$ $}] {};
      \draw[directed edge] (b1) -- (b2) node[midway,label={below:$ $}] {};
      \draw[undirected edge] (d1) -- (d'1) node[midway,label={below:$T$}] {};
      \draw[undirected edge] (d2) -- (d'2) node[midway,label={below:$ $}] {};
    \end{scope}

    \begin{scope}[
                shift={(24,0)},
      ]
      \draw[vertex set] (0,0) node (V1) {$A$} ellipse (3 and 4);
      \node[vertex,label={60:$b$}] (B) at ($(V1)+(12,0)$) {};
      \node[vertex,label={60:$ $}] (e1) at ($(V1)+(2,3)$) {};
      \node[vertex,label={-60:$ $}] (b1) at ($(V1)+(2,-3)$) {};
      \node[vertex,label={80:$ $}] (d1) at ($(V1)+(2.8,1.5)$) {};
      \node[vertex,label={-80:$ $}] (d2) at ($(V1)+(2.8,-1.5)$) {};
      \draw[directed edge] (e1) 
    .. controls ($(V1)+(6.25,4.25)$) and ($(V1)+(10.35,2.25)$)
      .. (B);
      \draw[directed edge] (b1) .. controls ($(V1)+(6.25,-4.25)$) and ($(V1)+(10.35,-2.25)$)
      .. (B);
      \draw[undirected edge] (d1) -- (B) node[midway,label={below:$T_A$}] {};
      \draw[undirected edge] (d2) -- (B) node[midway,label={below:$ $}] {};

    \end{scope}

  \end{scope}
\end{tikzpicture}
\caption{A cut with two undirected edges and all directed edges going from $A$ to $B$ followed by a contraction of $B$. }
\label{fig:contract}
\end{figure}

  Now suppose there is such a cut $(A,B)$ (see figure \ref{fig:contract}).  Construct a graph $G/B$ by contracting every vertex in $B$ into a single new vertex $b$.  Let $\mathcal{P}_A$ consist of all trails in $\mathcal{P}$ that are completely contained in $A$, together with a single trail $T_A$ combined from the (possibly empty) fragments of the two trails that crossed the cut, joined at $b$.  Since any cut in $G/B$ corresponds to a cut in $G$, $G/B$ is strongly connected and remains so after deletion of any single undirected edge. By   induction any orientation of $T_A$ in $G/B$ extends to a strong orientation of $(G/B,\mathcal{P}_A)$.
Let $G/A$, $a$, $\mathcal{P_B}$ and $T_B$ be defined symmetrically, then by the same argument any orientation of $T_B$ in $G/A$ extends to a strong orientation of $(G/A,\mathcal{P}_B)$.
Now for any orientation of $T$, we can choose orientations of $T_A$ and $T_B$ that are compatible.  The result follows by Lemma~\ref{lem:GcontractAB}.
\end{proof}

Notice that the partitioning of edges into trails does not matter in the case when no edge is forced. Since any undirected graph has no forced edges if it is $2$-edge connected, the theorem implies that the most naive algorithm: "directing trails that are forced and if none are forced direct an arbitrary trail" works for undirected graphs. In general for mixed graphs algorithm \ref{alg:mix} below can clearly be implemented in polynomial time and does solve the trail orientation problem for mixed graphs.

\begin{algorithm}[h!]\label{alg:mix}
      \KwIn{A mixed multigraph $G$ and a partition $\mathcal{P}$ of the undirected edges of $G$ into trails.}
  \KwOut{True if $(G,\mathcal{P})$ has a strong trail orientation, otherwise false.  $G$ is modified in place, either to have such a strong trail orientation, or to a forced graph that is not strongly connected.}
  \If{$G$ has a bridge or is not strongly connected}{
    \Return{false}
  }
  \While{$\sizeof{\mathcal{P}}>0$}{
    \eIf{for some undirected edge $e$, $G-e$ is not strongly connected}{
      Let $T\in\mathcal{P}$ be the trail containing $e$.

      \eIf{some orientation of $T$ leaves $G$ strongly connected}{
        Apply such an orientation of $T$ to $G$
      }{
        \Return{false}
      }
    }{
      Let $T\in\mathcal{P}$ be arbitrary.

      Update $G$ by orienting $T$ in an arbitrary direction.
    }
    Remove $T$ from $\mathcal{P}$.
  }
  \Return{true}
  \caption{\label{alg:mixed}Algorithm for mixed graphs.}
\end{algorithm}

Theorem \ref{thm:mixed} gives a sufficient condition for when a strong orientation exists and we deal with the other cases by dealing with the forced edges first. However, the generalised Robbins' Theorem provides a simple equivalent condition, which we lack. Finding such an equivalent condition when you have trail decomposition is an essential open problem for strong graph orientations. Due to figure \ref{fig:counter example} in this setting one has to take into account the structure of the trail partition.

\section{Linear time algorithm}\label{sec:lin}
In this section we provide our linear time algorithm for solving the trail orientation problem in undirected graphs. For this, we make two crucial observations.  First, we show that there is an easy linear time reduction from general graphs or multigraphs to cubic multigraphs.  Second, we show that in a cubic multigraph with $n$ vertices, we can in linear time find and delete a set of edges that are at the end of their trails, such that the resulting graph has $\Omega(n)$ $3$-edge connected components.  We further show that we can compute the required orientation recursively from an orientation of each $3$-edge connected component together with the cactus of $3$-edge connected components.
Since the average size of these components is constant, we can compute the orientations of most of them in linear time.  The rest contains at most a constant fraction of the vertices, and so a simple geometric sum argment tells us that the total time is also linear.

We start out by making the following reduction.

\begin{lemma}\label{lem:cubify}
  The one-way trail problem on a $2$-edge connected graph or multigraph with $n$ vertices and $m$ edges, reduces in $\OO(m+n)$ time to the same problem on a $2$-edge connected cubic multigraph with $2m$ vertices and $3m$ edges.
\end{lemma}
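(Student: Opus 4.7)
My plan is to make the graph cubic by replacing each vertex $v$ of degree $d_v$ by a cycle $C_v$ of $d_v$ new vertices---one per edge-endpoint at $v$---and rerouting each original edge to the unique cycle vertex corresponding to its endpoint at $v$. Then every new vertex has degree exactly $3$ (one original edge and two cycle edges), the total vertex count is $\sum_v d_v = 2m$, and the total edge count is $m + \sum_v d_v = 3m$, matching the statement.

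The crucial point is to order the vertices of $C_v$ so that the resulting trail partition of $G'$ is honest. The trail decomposition of $G$ induces a partial matching $M_v$ on the edge-endpoints at $v$: two endpoints at $v$ are matched whenever the two corresponding edges occur consecutively at $v$ in some trail of $G$. Since $M_v$ is a matching, I can order the $d_v$ endpoints around $C_v$ so that every matched pair sits in two consecutive cycle positions. The new trail partition is then defined as follows: for every matched pair $(e_i,e_{i+1})$, splice the cycle edge between $v_i$ and $v_{i+1}$ into the trail containing $e_i$ and $e_{i+1}$ (placing it between them); every cycle edge not used in this way becomes a single-edge trail by itself. Each matched pair consumes a distinct cycle edge, so every new trail really is a trail, and together they partition $E(G')$.

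For $2$-edge connectivity: contracting each $C_v$ to a point recovers $G$, and both $G$ and each $C_v$ (a $d_v$-cycle, or a pair of parallel edges when $d_v=2$) are $2$-edge connected, hence so is $G'$. For the validity of the reduction, suppose we are given a strong trail orientation of $G'$ (which exists by Theorem~\ref{thm:exist}); orient each original edge of $G$ in the direction it received at $v_i$ in $G'$. Each original trail of $G$ is then a subsequence of a consistently oriented trail in $G'$, so it is itself consistently oriented, and $G$ inherits strong connectivity from $G'$ because $G$ is a contraction of $G'$ (using the observation from the preliminaries).

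All of the above is local per edge-endpoint: reading $M_v$ off the trails, laying out $C_v$, building the new trail partition, and afterwards translating the orientation back each cost constant time per endpoint, so the total running time is $O(n+m)$. I do not expect a deep obstacle here; the main subtlety is simply the bookkeeping to extract $M_v$ correctly from the given trail data structure and to thread the original trails through their new cycle edges, with self-loops and multi-edges fitting in uneventfully because they just contribute two endpoints at $v$ that are matched or unmatched in the usual way.
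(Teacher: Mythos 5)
Your construction is exactly the paper's: expand each vertex into a cycle ordered so that trail-consecutive edge endpoints are adjacent, splice that cycle edge into the trail, make the remaining cycle edges singleton trails, and translate orientations back via contraction. The proposal is correct and matches the paper's proof, just with more explicit bookkeeping of the matching and the connectivity argument.
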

\begin{proof}

  Order the edges adjacent to each vertex such that two edges that are adjacent on the same trail are consecutive in the order.  Replace each single vertex $v$ with a cycle of length $\deg(v)$, with each vertex of the new cycle inheriting a corresponding neighbour of $v$.  Note that for a vertex of degree $2$, this creates a pair of parallel edges, so the result may be a multigraph.  Since edges on the same trail are neighbours, we can make the cycle-edge between the two corresponding vertices belong to the same trail.  The rest of the cycle edges form new length $1$ trails.  This graph has exactly $2m$ vertices and $3m$ edges, and  any one-way trail orientation on this graph translates to a one-way trail orientation of the original graph.  The graph is constructed in $\OO(m+n)$ time.
  
    \begin{figure}[h!]
\centering
\begin{tikzpicture}[x=0.5cm,y=0.5cm,scale=0.75]
  \begin{scope}[
      every path/.style={
              },
      every node/.style={
                text=black,
        inner sep=1pt,
      },
            vertex set/.style={
        dashed,
      },
            vertex/.style={
        draw,
        circle,
        fill=white,
        minimum size=2mm,
        inner sep=0pt,
        outer sep=0pt,
      },
            edge/.style={blue,thick},
      undirected edge/.style={edge},
      directed edge/.style={edge,->,>=stealth'},
    ]
    \begin{scope}[      every node/.style={
        circle,
        draw,
        fill=white,
        minimum size=0.75cm,
      },
      every path/.style={
        ultra thick,
      }
    ]
      \node[label={0:$v$}] (e1) at ($(0,0)$) {};
      \node (a5) at ($(0,5)$) {};
      \node (a1) at ($(-5,-5)$) {};
      \node (a2) at ($(-5,5)$) {};
      \node (a3) at ($(5,-5)$) {};
      \node (a4) at ($(5,5)$) {};

      \draw[red] (a1) -- (e1);
      \draw[red] (a2) -- (e1);
      \draw[green] (a5) -- (e1);
      \draw[undirected edge] (a3) -- (e1);
      \draw[undirected edge] (a4) -- (e1) ;
    \end{scope}

    \begin{scope}[
                shift={(24,0)},
        every node/.style={
        circle,
        draw,
        fill=white,
        minimum size=0.75cm,
      },
      every path/.style={
        ultra thick,
      }
      ]
        \node (e5) at ($(0,4)$) {};
      \node (e1) at ($(1.5,-1.5)$) {};
      \node (e2) at ($(-1.5,1.5)$) {};
      \node (e3) at ($(1.5,1.5)$) {};
      \node (e4) at ($(-1.5,-1.5)$) {};
\node (a5) at ($(0,7)$) {};
      \node (a1) at ($(-5,-5)$) {};
      \node (a2) at ($(-5,5)$) {};
      \node (a3) at ($(5,-5)$) {};
      \node (a4) at ($(5,5)$) {};

      \draw[red] (a1) -- (e4);
      \draw[red] (a2) -- (e2);
      \draw[undirected edge] (a3) -- (e1);
      \draw[undirected edge] (a4) -- (e3);
      \draw[undirected edge] (e3) -- (e1);
      \draw[red] (e2) -- (e4);
      \draw[black] (e2) -- (e5);
      \draw[black] (e3) -- (e5);
      \draw[black] (e1) -- (e4);
      \draw[green] (e5) -- (a5);
    \end{scope}

  \end{scope}
\end{tikzpicture}
\caption{A node of degree 5 turns into a cycle of length 5 }
\label{fig:Cubic}
\end{figure}
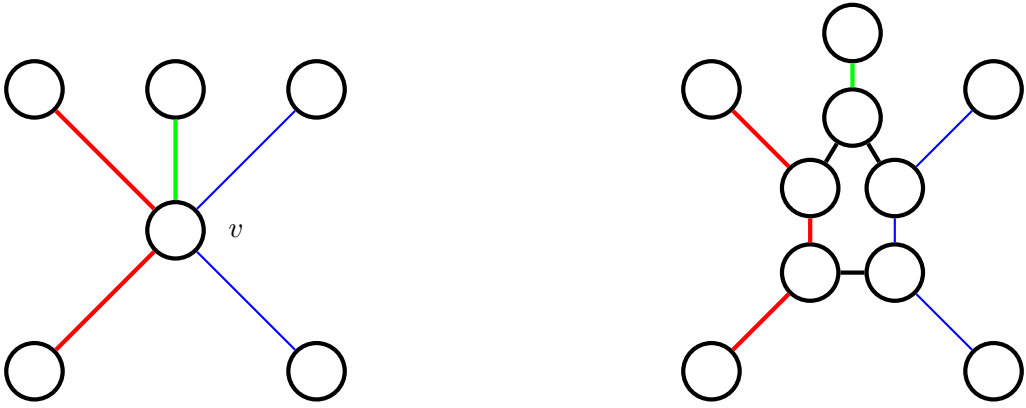
\end{proof}
Recall now that a graph $C$ is called a \emph{cactus} if it is connected and each edge is contained in at most one cycle. If $G$ is any connected graph we let $C_1,\dots,C_k$ be its $3$-edge connected components. It is well known that if we contract each of these we obtain a cactus graph. For a proof of this result see section 2.3.5 of \cite{Nagamochi:2008:AAG:1434866}. As the cuts in a contracted graph are also cuts in the original graph we have that if $G$ is $2$-edge connected then the cactus graph is $2$-edge connected. The edges of the cactus are exactly the edges of $G$ which are part of a $2$-edge cut. We will call these edges \emph{2-edge critical}.

It is easy to check that if a cactus has $m$ edges and $n$ vertices then $m\leq 2(n-1)$. We will be using this result in the proof of the following lemma.
\begin{lemma}\label{lem:manycomponents}
  Let $G=(V,E)$ be a cubic $2$-edge connected multigraph, let $X\subseteq E$, and let $F\supseteq E\setminus X$ be minimal such that $H=(V,F)$ is  $2$-edge connected.  Then $H$ has at least $\frac{2}{5}\sizeof{X}$ distinct $3$-edge connected components.
\end{lemma}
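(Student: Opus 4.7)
The plan is to split $X$ into $X' := X \cap F$ (the edges of $X$ that survive in $H$) and $Y := X \setminus F$ (the edges of $X$ that are discarded), and to bound each part separately in terms of the number $k$ of $3$-edge-connected components of $H$.

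For $|X'|$, I will exploit the minimality of $F$. For every $e \in X'$, the graph $(V, F \setminus \{e\})$ is not $2$-edge-connected, yet $H$ itself is, so $H - e$ must contain a bridge $e'$, which makes $\{e, e'\}$ a $2$-edge cut of $H$. Hence every edge of $X'$ is $2$-edge-critical and therefore appears as an edge of the cactus $C$ of $3$-edge-connected components of $H$. Since $H$ is $2$-edge-connected, so is $C$, and the cactus bound $m \leq 2(n-1)$ recalled in the paper, applied to the $k$-vertex cactus $C$, gives $|X'| \leq 2(k-1)$.

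For $|Y|$, I will use that $G$ is cubic while $H$ has minimum degree at least $2$: every vertex of $G$ is incident to at most one edge of $Y$, so $Y$ is a matching, and $H$ has exactly $2|Y|$ vertices of degree $2$. The two edges at any such degree-$2$ vertex $v$ form a $2$-edge cut isolating $\{v\}$, so $v$ is not $3$-edge-connected to any other vertex and hence forms its own singleton $3$-edge-connected component; distinct degree-$2$ vertices thus yield distinct components, and $2|Y| \leq k$. Combining the two bounds, $|X| = |X'| + |Y| \leq 2(k-1) + k/2 < 5k/2$, so $k > \tfrac{2}{5}|X|$, which is the required inequality.

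The main obstacle is not arithmetic but careful justification of the two structural claims in the multigraph setting: that every $2$-edge-critical edge of $H$ is indeed a cactus edge, and that every degree-$2$ vertex of $H$ is really its own $3$-edge-connected component even when its two incident edges are a pair of parallel edges to a common neighbour. Both facts are standard consequences of the cactus characterisation and Menger's theorem, but they deserve an explicit sanity check before the counting argument above is considered complete.
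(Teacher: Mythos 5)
Your proof is correct and rests on exactly the same two structural facts as the paper's: surviving edges of $X$ are $2$-edge-critical (hence cactus edges, bounded via $m\leq 2(n-1)$), and deleted edges of $X$ each create two degree-$2$ vertices that form singleton $3$-edge-connected components. The only difference is bookkeeping: the paper splits into cases according to whether $\sizeof{X\cap F}\geq\frac{4}{5}\sizeof{X}$, whereas you add the two inequalities $\sizeof{X\cap F}\leq 2(k-1)$ and $\sizeof{X\setminus F}\leq k/2$, which even gives the marginally stronger strict bound $k>\frac{2}{5}\sizeof{X}$.
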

\begin{proof}
  \newcommand{\Xdel}{X_{\text{del}}}
  \newcommand{\Xkeep}{X_{\text{keep}}}
  Let $\Xdel\subseteq X$, be the set of edges deleted from $G$ to obtain $H$, and let $\Xkeep=X\setminus\Xdel$ be the remaining edges in $X$.

If $\sizeof{\Xkeep}\geq\frac{4}{5}\sizeof{X}$, then by minimality of $H$ there are at least $\sizeof{\Xkeep}$ 2-edge-critical edges in $H$ i.e. edges of the corresponding cactus, and thus at least $\frac{1}{2}\sizeof{\Xkeep}+1\geq\frac{2}{5}\sizeof{X}+1$ distinct $3$-edge connected components.

If $\sizeof{\Xkeep}\leq\frac{4}{5}\sizeof{X}$ then $\sizeof{\Xdel}\geq\frac{1}{5}\sizeof{X}$, and since $G$ is cubic and the removal of each edge creates two vertices of degree $2$ we must have that $H$ has at least $2\sizeof{\Xdel}\geq\frac{2}{5}\sizeof{X}$  distinct $3$-edge connected components.
\end{proof}

\begin{lemma}\label{lem:innertree}
  Let $G=(V,E)$ be a connected cubic multigraph with $E$ partitioned into trails.  Then $G$ has a spanning tree that contains all edges that are not at the end of their trail.
\end{lemma}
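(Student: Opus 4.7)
My plan is to set $H$ to be the subgraph of $G$ consisting of all edges that are not at the end of their trail, show that $H$ is a forest, and then extend $H$ to a spanning tree of $G$ by greedily adding end-edges (possible since $G$ is connected).

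The local key is cubicity. At each vertex $v$ of $G$, the trail partition groups the $\deg(v)=3$ edge-endpoints at $v$ (with a loop contributing two) into \emph{visits}: each visit either pairs two consecutive edges of some trail meeting at $v$, or consists of a single edge-endpoint belonging to the first or last edge of its trail. Since $3$ partitions only as $2+1$ or $1+1+1$ using parts of size at most $2$, there is at most one paired pair at $v$. Furthermore, a non-end edge $e$ at $v$ is necessarily paired at $v$: being non-end means $e$ is internal in its trail, so at every endpoint of $e$ (in particular at $v$) it is paired with its trail-predecessor or trail-successor. In particular, a non-end loop would contribute two pairs at its vertex, exceeding the bound of one, so there are no non-end loops.

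To show $H$ is acyclic, I would assume for contradiction that $H$ contains a cycle $C = u_0 f_1 u_1 f_2 \cdots f_r u_r$ with $u_r = u_0$. The case $r = 1$ (a loop) is immediately ruled out above. For $r \geq 2$, the cycle vertices $u_0, \dots, u_{r-1}$ are distinct, so at each $u_i$ both cycle edges $f_i, f_{i+1}$ are non-loops contributing one paired endpoint each; by the local bound these two endpoints must form the unique pair at $u_i$, so $f_i$ and $f_{i+1}$ are consecutive in a common trail. Since each edge belongs to a unique trail, chaining around $C$ places all the $f_i$ into a single trail $T$, appearing in $T$ as a cyclically consecutive sequence.

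The main obstacle is arguing that this cyclic consecutiveness is impossible. The central fact is that the consecutive-in-trail relation on the edges of a trail is a path, not a cycle, so for $r \geq 3$ we are done immediately. For $r = 2$, the parallel edges $f_1, f_2$ between $u_0 \neq u_1$ would have to be paired in $T$ at \emph{both} $u_0$ and $u_1$; but two consecutive edges of a trail are paired at precisely one vertex (their common trail-visit), a contradiction. This finishes the proof that $H$ is a forest, and the desired spanning tree follows.
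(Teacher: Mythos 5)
Your proof is correct and follows essentially the same route as the paper: let $F$ be the set of non-end edges, observe that cubicity allows at most one trail-pairing per vertex so $(V,F)$ has maximum degree $2$, rule out cycles using the trail structure, and extend the resulting forest to a spanning tree. The paper simply asserts that $(V,F)$ is a collection of vertex-disjoint paths; your argument supplies the details (including the $r=2$ parallel-edge and loop cases) that justify that assertion.
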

\begin{proof}
  Let $F$ be the set of edges that are not at the end of their trail.
  Since $G$ is cubic, the graph $(V,F)$ is a collection of vertex-disjoint paths, and in particular it is acyclic. Since $G$ is connected $F$ can be extended to a spanning tree. 
\end{proof}
Note that we can find this spanning tree in linear time. Indeed, we may assign weight $0$ to edges in $F$ and $1$ to the remaining edges and use the so-called\footnote{Originally discovered by Jarn\'ik~\cite{jarnik1930}, later rediscovered by Prim~\cite{prim1957}} Prim's minimal spanning tree algorithm with a suitable priority queue to find the tree.

\begin{lemma}\label{lem:constantfraction}
  Let $G=(V,E)$ be a cubic $2$-edge connected multigraph with $E$ partitioned into trails.
  Let $T$ be a spanning tree of $G$ containing all edges that are not at the end of their trail.
  Let $H$ be a minimal subgraph of $G$ that contains $T$ and is $2$-edge connected.
  Then for any $k\geq5$, less than $\frac{4}{5}\frac{k}{k-1}\sizeof{V}$ of the vertices in $H$ are in a $3$-edge connected component with at least $k$ vertices.
\end{lemma}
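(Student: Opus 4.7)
The plan is to combine the component-count lower bound from Lemma~\ref{lem:manycomponents} with a simple vertex-counting inequality. First I would apply Lemma~\ref{lem:manycomponents} with $X := E \setminus T$. Since $H$ is a minimal $2$-edge connected subgraph of $G$ containing $T = E \setminus X$, the hypothesis of the lemma is met by taking $F = E(H)$: the containment $F \supseteq E \setminus X$ is immediate from $T \subseteq H$, and the minimality of $F$ is precisely the minimality of $H$. Because $G$ is cubic, $\sizeof{E} = \tfrac{3}{2}\sizeof{V}$, and therefore
\[
\sizeof{X} \;=\; \sizeof{E} - (\sizeof{V}-1) \;=\; \tfrac{\sizeof{V}}{2} + 1.
\]
The lemma then guarantees that the number $c$ of $3$-edge connected components of $H$ satisfies $c \geq \tfrac{2}{5}\sizeof{X} = \tfrac{\sizeof{V}}{5} + \tfrac{2}{5}$.

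Next I would carry out a one-line counting argument. Let $a$ denote the number of vertices of $H$ that lie in a $3$-edge connected component of size at least $k$. These $a$ vertices are partitioned among at most $a/k$ large components, while the remaining $\sizeof{V}-a$ vertices lie in at most $\sizeof{V}-a$ small components (each contributes at least one vertex). Hence
\[
c \;\leq\; \frac{a}{k} + (\sizeof{V}-a) \;=\; \sizeof{V} - a\cdot\tfrac{k-1}{k}.
\]
Combining this with the lower bound $c \geq \tfrac{\sizeof{V}}{5} + \tfrac{2}{5}$ and rearranging yields
\[
a\cdot\tfrac{k-1}{k} \;\leq\; \tfrac{4\sizeof{V}}{5} - \tfrac{2}{5} \;<\; \tfrac{4}{5}\sizeof{V},
\]
so $a < \tfrac{4}{5}\cdot\tfrac{k}{k-1}\sizeof{V}$, as required.

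There is essentially no hard step here; the proof is bookkeeping plus arithmetic. The only thing that genuinely needs care is confirming that the setup matches Lemma~\ref{lem:manycomponents} (with $X = E\setminus T$ and $F = E(H)$) and that cubicity is used correctly to pin down $\sizeof{X}$. Note also that the hypothesis $k\geq 5$ is not consumed anywhere in the derivation; it is simply the regime in which $\tfrac{4k}{5(k-1)} \leq 1$, so that the stated bound on $a$ is non-vacuous.
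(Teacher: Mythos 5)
Your proof is correct and follows essentially the same route as the paper's: apply Lemma~\ref{lem:manycomponents} with $X=E\setminus T$, use cubicity to get $\sizeof{X}=\frac{1}{2}\sizeof{V}+1$ and hence more than $\frac{1}{5}\sizeof{V}$ components, then bound the vertices in large components by the same counting inequality. Your write-up is in fact slightly more explicit than the paper's (which compresses the final count into one sentence), and your side remark about where $k\geq 5$ matters is accurate.
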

\begin{proof}
  Let $X$ be the set of edges that are not in $T$.  Since $G$ is cubic, $\sizeof{X}=\frac{1}{2}\sizeof{V}+1$.  By Lemma~\ref{lem:manycomponents} $H$ has at least $\frac{2}{5}\sizeof{X}>\frac{1}{5}\sizeof{V}$ $3$-edge connected components.  Each such component contains at least one vertex, so the total number of vertices in components of size at least $k$ is less than $\frac{k}{k-1}\left(\sizeof{V}-\frac{1}{5}\sizeof{V}\right)=\frac{4}{5}\frac{k}{k-1}\sizeof{V}$.
\end{proof}

\begin{definition}\label{def:componentgraph}
  Let $C$ be a $3$-edge connected component in some graph $H$, whose edges is partitioned into trails.  Define $\Gamma_H(C)$ to be the $3$-edge connected graph obtained by replacing each min-cut $\set{e,f}$ where $e=(e_1,e_2)$ and $f=(f_1,f_2)$ and $e_1,f_1\in C$ with a single new edge $(e_1,f_1)$.  Define the corresponding partition of the edges of $\Gamma_H(C)$ into trails by taking every trail that is completely contained in $C$, together with new trails combined from the fragments of the trails that were broken by the min-cuts together with the  new edges that replaced them. See figure \ref{fig:cactus}.
\end{definition}
At this point the idea of the algorithm can be explained. We remove as many of the edges, at the end of their trails, as we can still maintaining that the graph is $2$-edge connected. Lemma \ref{lem:constantfraction} guarantees that we obtain a graph $H$ with $\Omega(|V|)$ many $3$-edge connected components of size $O(1)$. We solve the problem for each $\Gamma_H(C)$ for every $3$-edge connected component. Finally, we combine the solutions for the different components like in the proof of theorem \ref{thm:exist}. 

  \begin{figure}[h!]
\centering
\begin{tikzpicture}[x=0.5cm,y=0.5cm,scale=0.4]
  \begin{scope}[
      every path/.style={
              },
      every node/.style={
                text=black,
        inner sep=1pt,
      },
            vertex set/.style={
        dashed,
      },
            vertex/.style={
        draw,
        circle,
        fill=white,
        minimum size=2mm,
        inner sep=0pt,
        outer sep=0pt,
      },
            edge/.style={blue,thick},
      undirected edge/.style={edge},
      directed edge/.style={edge,->,>=stealth'},
    ]
    \begin{scope}[      every node/.style={
        circle,
        draw,
        fill=white,
        minimum size=1.5cm,
      },
      every path/.style={
        ultra thick,
      },
      vertex set/.style={
        dashed,
      },
    ]
      \node[vertex set,label={0:$ $}] (e1) at ($(0,0)$) {$C$};
      \node[vertex set] (a2) at ($(0,10)$) {};
      \node[vertex set] (a4) at ($(-10,-10)$) {};
      \node[vertex set] (a1) at ($(-10,10)$) {};
      \node[vertex set] (a5) at ($(10,-10)$) {};
      \node[vertex set] (a3) at ($(10,10)$) {};
      \node[vertex set] (a6) at ($(10,0)$) {};

      \draw[red] (a1) -- (e1);
      \draw[red] (a1) -- (a2);
      \draw[red] (a3) -- (a2);
      \draw[red] (a3) -- (e1);
      \draw[undirected edge] (a4) -- (a5);
      \draw[undirected edge] (a4) -- (e1);
      \draw[undirected edge] (a5) -- (e1);
      \draw[green] (a6) .. controls ($(5,2)$)
      .. (e1);
      \draw[green] (a6) .. controls ($(5,-2)$)
      .. (e1);
    \end{scope}
    \begin{scope}[
                shift={(24,0)},
        every node/.style={
        circle,
        draw,
        fill=white,
        minimum size=1.5cm,
      },
      every path/.style={
        ultra thick,
      }
      ]
      \node[vertex set,label={0:$ $}] (e1) at ($(0,0)$) {$\Gamma_H(C)$};
      \draw[green] (e1) .. controls ($(5,-2)$) and ($(5,2)$) 
      .. (e1);
      \draw[red] (e1) .. controls ($(-3,5)$) and ($(3,5)$) 
      .. (e1);
      \draw[blue] (e1) .. controls ($(-3,-5)$) and ($(3,-5)$) 
      .. (e1);
    \end{scope}

  \end{scope}
\end{tikzpicture}
\caption{$3$-edge connected components, notice how every edge out from the centre is part of a cycle. This right hand shows $\Gamma_H(C)$ where $C$ is the component in the middle. }
\label{fig:cactus}
\end{figure}
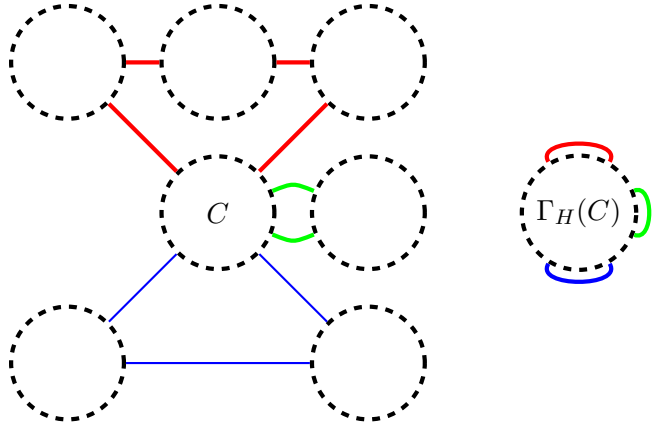

\begin{theorem}\label{thm:linear}
  The one-way trail orientation problem can be solved in $\OO(m+n)$ time on any $2$-edge connected graph or multigraph with $n$ vertices and $m$ edges.
\end{theorem}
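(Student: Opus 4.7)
The plan is a divide-and-conquer recursion whose depth is absorbed by a geometric series.

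\textbf{Reduction and setup.} First, apply Lemma~\ref{lem:cubify} to reduce in $\OO(m+n)$ time to a cubic $2$-edge connected multigraph $(G,\mathcal{P})$; in what follows let $n$ denote the vertex count of this reduced instance, so $m = \tfrac{3}{2}n$. Using Lemma~\ref{lem:innertree} together with Prim's algorithm on $\{0,1\}$-weights and a bucket priority queue, compute in linear time a spanning tree $T$ that contains every trail-internal edge. Set $X = E\setminus T$ and compute a minimal $2$-edge connected spanning subgraph $H \supseteq T$ by removing a maximal safe subset $X_{\mathrm{del}} \subseteq X$.

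\textbf{Decomposition and recursion.} Build the $3$-edge connected components $C_1,\ldots,C_p$ of $H$ and their cactus representation in linear time via a classical algorithm. Fix the constant $k = 6$; by Lemma~\ref{lem:constantfraction} the total vertex count over components with $n_i \geq k$ is at most $\alpha n$ where $\alpha = \tfrac{4k}{5(k-1)} < 1$. For each $C_i$ form $\Gamma_H(C_i)$ with its inherited trail partition (Definition~\ref{def:componentgraph}); this is again a cubic, $3$-edge connected (hence $2$-edge connected) multigraph. Solve each small component ($n_i < k$) in $\OO(1)$ time by brute force, and recurse on each large one. By Theorem~\ref{thm:exist} each $\Gamma_H(C_i)$ admits, and so returns, a strong trail orientation.

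\textbf{Combining and finishing.} Each virtual edge of $\Gamma_H(C_i)$ corresponds to one side of a min-cut of $H$, so its orientation inside $\Gamma_H(C_i)$ dictates the direction of the matching cut-edge of $H$ at its $C_i$-endpoint. Walking the cactus by DFS and flipping whole component-orientations as needed, exactly as in the proof of Theorem~\ref{thm:exist} (where $G_1$ and $G_2$ were flipped so that $e$ and the bridge $b$ came out compatibly oriented), these local choices can be made globally consistent; each cactus cycle can be oriented as a directed cycle, so any min-cut picked out of it has one edge entering and one leaving. The result is a strong orientation of $H$ respecting the trail partition of $G$, since every trail either lies inside one $C_i$ or was reassembled through now-consistently-oriented virtual edges. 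Re-insert each edge of $X_{\mathrm{del}}$: such an edge is at the end of its trail, so its direction is already forced by the rest of its (oriented) trail, and since $H$ is strongly connected, adding it preserves strong connectivity. Undoing Lemma~\ref{lem:cubify} produces the required orientation of the original graph.

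\textbf{Running time and main obstacle.} The recurrence $T(n) \leq cn + \sum_i T(n_i)$, with $\sum_{n_i \geq k} n_i \leq \alpha n$ and at most $n$ small components each contributing $\OO(1)$, gives $T(n) \leq Cn$ by induction for any $C \geq c/(1-\alpha)$, hence $T(n) = \OO(n)$; together with the initial reduction this yields the claimed $\OO(m+n)$ bound. The main obstacle is the linear-time computation of a minimal $2$-edge connected $H \supseteq T$: generic dynamic $2$-edge connectivity is far too slow, so I would exploit that the graph is cubic and that all deletions are offline, identifying in one pass a maximal deletable subset of $X$ by computing the cactus of $G$ once and reading off which edges of $X$ are not the sole cover of any tree edge. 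A secondary subtlety is the careful bookkeeping of trail fragments through the $\Gamma_H$ construction and the orientation-flipping on the cactus, but both are local and amortise to linear time.
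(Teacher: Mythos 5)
Your overall architecture matches the paper's: cubify, build a spanning tree containing all trail-internal edges, shrink to a minimal $2$-edge connected subgraph $H$, decompose $H$ into its cactus of $3$-edge connected components, recurse on each $\Gamma_H(C_i)$, and glue by orienting every cactus cycle as a directed cycle. The recurrence analysis via Lemma~\ref{lem:constantfraction} is also the paper's. However, there is one genuine gap, and you have correctly located it yourself: the linear-time computation of a minimal (w.r.t.\ inclusion) $2$-edge connected subgraph $H\supseteq T$. Your proposed fix --- compute the cactus of $G$ once and delete in a single pass every edge of $X$ that is not the sole cover of some tree edge --- does not work. Redundancy of non-tree edges is not an independent property: if a tree edge $t$ is covered by exactly two non-tree edges $e$ and $f$, and neither $e$ nor $f$ is the sole cover of any tree edge, your pass deletes both and $t$ becomes a bridge. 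A correct greedy must account for the fact that each deletion changes which remaining edges are deletable, and doing that naively costs a $2$-edge connectivity query per edge, which lands you back at the non-linear bound of the paper's Corollary~3. The paper closes this gap by invoking the linear-time algorithm of Kelsen and Ramachandran for computing a minimal $2$-edge connected spanning subgraph containing a prescribed subgraph; without that citation (or an equivalent algorithm of your own), the claimed $\OO(m+n)$ bound is not established. Note also that minimality of $H$ is not just a running-time convenience: Lemma~\ref{lem:manycomponents} uses it to guarantee $\Omega(\sizeof{X})$ many $3$-edge connected components, which is what drives the geometric decay.

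A secondary, easily repaired slip: when re-inserting a deleted edge $e\in X_{\mathrm{del}}$ you assert its direction ``is already forced by the rest of its (oriented) trail,'' but a trail may consist of a single edge, or may have had all of its edges deleted (every edge of a length-two trail is an end edge), in which case nothing forces the direction and one must simply pick an arbitrary orientation, as the paper's Algorithm~2 does. This does not affect correctness, since adding any edge to a strongly connected digraph preserves strong connectivity, but the argument as you stated it is incomplete.
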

\begin{proof}
  By Lemma~\ref{lem:cubify}, we can assume the graph is cubic. For the algorithm we will use two subroutines. First of all when we have found the minimum spanning tree $T$ containing the edges that are not on the end of their trail we can use the algorithm of Kelsen \cite{KelsenR91} to, in linear time, find a minimal (w.r.t. inclusion) subgraph $H$ of $G$ that contains $T$ and is $2$-edge connected. Secondly we will use the algorithm of Melhorn \cite{Mehlhorn2017} to, in linear time, build the cactus graph of $3$-edge connected components.
  The algorithm runs as follows:
  \begin{enumerate}
  \item Construct a spanning tree $T$ of $G$ that contains all edges that are not at the end of their trail.
  \item Construct a minimal subgraph $H$ of $G$ that contains $T$ and is $2$-edge connected\footnote{See Kelsen \cite{KelsenR91}}.
  \item Find the cactus of $3$-edge connected components of\footnote{See Melhorn \cite{Mehlhorn2017}} $H$.
  \item For each $3$-edge connected component $C_i$, construct $\Gamma_H(C_i)$.
  \item Recursively compute an orientation for each\footnote{Note that $\Gamma_H(C_i)$ is cubic unless it consists of exactly one node. In this case however we don't need to do anything.} $\Gamma_H(C_i)$.
  \item Combine the orientations from each component.
  \end{enumerate}
  First we will show correctness and then we will determine the running time.
  
  Recall that we can flip the orientation in each $\Gamma_H(C_i)$ and still obtain a strongly connected graph respecting the trails in $\Gamma_H(C_i)$. The way we construct the orientation of the edges of $G$ is by flipping the orientation of each $\Gamma_H(C_i)$ in such a way that each cycle in the cactus graph becomes a directed cycle\footnote{In practise this is done by making a DFS (or any other search tree one likes) of the cactus and repeatedly orienting each component in a way consistent with the previous ones.}. This can be done exactly because no edge of the cactus is contained in two cycles. By construction this orientation respects the trails so we need to argue that it gives a strongly connected graph.

 For showing that the resulting graph is strongly connected, first let every $3$-edge connected component be contracted, then the graph is strongly connected since the cycles of the cactus graph have become directed cycles. Now assume inductively that we have uncontracted some of the components and call this graph $G_1$. Now we uncontract another component $C$ (see figure \ref{fig:blowing up}) and obtain a new graph $G_2$ which we will show is also strongly connected. 
 If $u,v \in C$, then since $\Gamma_H(C)$ is strongly connected there is a path from $u$ to $v$ in $\Gamma_H(C)$. If this path only contains edges which are edges in $C$ clearly this path also exists in $G_2$ so we are done. If the path uses one of the added edges $(e_1,f_1)$ (without loss of generality oriented from $e_1$ to $f_1$), it is because there are edges $(e_1,e_2)$ and $(f_1,f_2)$ forming a cut and thus being part of a cycle in the cactus. In this case we use edge $(e_1,e_2)$ to leave component $C$ and then go from $e_2$ back to component $C$ which is possible since $G_1$ was strongly connected. When we get back to the component $C$ we must arrive at $f_1$ since otherwise there would be two cycles in the cactus containing the edge $(e_1,e_2)$. Hence the edge $(e_1,f_1)$ was not needed. This argument can be used for any of the edges of $\Gamma_H(C)$ that are not in $C$ and thus we can move between any two nodes in $C$. Since $G_1$ was strongly connected this suffices to show that $G_2$ is strongly connected. By induction this implies that after uncontracting all components the resulting graph is strongly connected.
 
   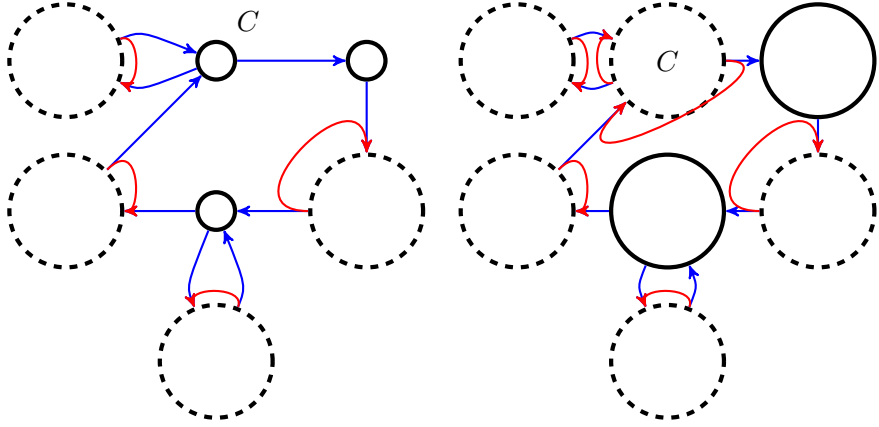
\begin{figure}[h!]
\centering
\begin{tikzpicture}[x=0.5cm,y=0.5cm,scale=0.4]
  \begin{scope}[
      every path/.style={
              },
      every node/.style={
                text=black,
        inner sep=1pt,
      },
            vertex set/.style={
        dashed,
      },
            vertex/.style={
        draw,
        circle,
        fill=white,
        minimum size=2mm,
        inner sep=0pt,
        outer sep=0pt,
      },
            edge/.style={blue,thick},
      undirected edge/.style={edge},
      directed edge/.style={edge,->,>=stealth'},
    ]
    \begin{scope}[      every node/.style={
        circle,
        draw,
        fill=white,
        minimum size=0.5cm,
      },
      every path/.style={
        ultra thick,
      },
      vertex set/.style={
        dashed,
      },
    ]
      \node[vertex set,minimum size=1.5cm] (a1) at ($(-10,10)$) {};
      \node[label={60:$C$}] (a2) at ($(0,10)$) {};
      \node (a3) at ($(10,10)$) {};
      \node[vertex set,minimum size=1.5cm] (a4) at ($(-10,0)$) {};
      \node (a5) at ($(0,0)$) {};
      \node[vertex set,minimum size=1.5cm] (a6) at ($(10,0)$) {};
      \node[vertex set,minimum size=1.5cm] (a7) at ($(0,-10)$) {};

      \draw[directed edge] (a2) -- (a3);
      \draw[directed edge] (a5) -- (a4);
      \draw[directed edge] (a4) -- (a2);
      \draw[directed edge] (a3) -- (a6);
      \draw[directed edge] (a6) -- (a5);
      \draw[directed edge] (a2) .. controls ($(-5,8)$)
      .. (a1);
      \draw[directed edge] (a1) .. controls ($(-5,12)$)
      .. (a2);
      \draw[directed edge] (a5) .. controls ($(-2,-5)$)
      .. (a7);
      \draw[directed edge] (a7) .. controls ($(2,-5)$)
      .. (a5);
      \draw[directed edge,red] (a1) .. controls ($(-5,12)$) and ($(-5,8)$)
      .. (a1);
      \draw[directed edge,red] (a4) .. controls ($(-5,5)$) and ($(-5,0)$)
      .. (a4);
      \draw[directed edge,red] (a6) .. controls ($(0,0)$) and ($(10,10)$)
      .. (a6);
      \draw[directed edge,red] (a7) .. controls ($(2,-5)$) and ($(-2,-5)$)
      .. (a7);
    \end{scope}
    \begin{scope}[
                shift={(30,0)},
        every node/.style={
        circle,
        draw,
        fill=white,
        minimum size=1.5cm,
      },
      every path/.style={
        ultra thick,
      }
      ]
      \node[vertex set,minimum size=1.5cm] (a1) at ($(-10,10)$) {};
      \node[vertex set,minimum size=1.5cm] (a2) at ($(0,10)$) {$C$};
      \node (a3) at ($(10,10)$) {};
      \node[vertex set,minimum size=1.5cm] (a4) at ($(-10,0)$) {};
      \node (a5) at ($(0,0)$) {};
      \node[vertex set,minimum size=1.5cm] (a6) at ($(10,0)$) {};
      \node[vertex set,minimum size=1.5cm] (a7) at ($(0,-10)$) {};

      \draw[directed edge] (a2) -- (a3);
      \draw[directed edge] (a5) -- (a4);
      \draw[directed edge] (a4) -- (a2);
      \draw[directed edge] (a3) -- (a6);
      \draw[directed edge] (a6) -- (a5);
      \draw[directed edge] (a2) .. controls ($(-5,8)$)
      .. (a1);
      \draw[directed edge] (a1) .. controls ($(-5,12)$)
      .. (a2);
      \draw[directed edge] (a5) .. controls ($(-2,-5)$)
      .. (a7);
      \draw[directed edge] (a7) .. controls ($(2,-5)$)
      .. (a5);
      \draw[directed edge,red] (a1) .. controls ($(-5,12)$) and ($(-5,8)$)
      .. (a1);
      \draw[directed edge,red] (a4) .. controls ($(-5,5)$) and ($(-5,0)$)
      .. (a4);
      \draw[directed edge,red] (a6) .. controls ($(0,0)$) and ($(10,10)$)
      .. (a6);
      \draw[directed edge,red] (a7) .. controls ($(2,-5)$) and ($(-2,-5)$)
      .. (a7);
      \draw[directed edge,red] (a2) .. controls ($(-5,8)$) and ($(-5,12)$)
      .. (a2);
      \draw[directed edge,red] (a2) .. controls ($(10,10)$) and ($(-10,0)$)
      .. (a2);
    \end{scope}

  \end{scope}
\end{tikzpicture}
\caption{Before and after uncontracting component $C$}
\label{fig:blowing up}
\end{figure}

  Now for the running time. By Lemma~\ref{lem:constantfraction} each level of recursion reduces the number of vertices in ``large'' components by a constant fraction, for instance for $k=10$ we reduce the number of vertices in large components by a factor of $\frac{1}{9}$. Let $f(n)$ be the worst case running time with $n$ nodes for a cubic graph, and pick $c$ large enough such that $cn$ is larger than the time it takes to go through steps $1$-$4$ and $6$ as well as computing the orientations in the ``small'' components. Let $a_1,\dots,a_k$ be the number of vertices in the ``large'' $3$-edge connected components. Then $\sum_i a_i \leq \frac{8n}{9}$ and
\begin{align*}
    f(n) \leq cn+ \sum_i f(a_i) \\
\end{align*}
Inductively we may assume that $f(a_i)\leq 9cn$ and thus obtain
\begin{align*}
    f(n) \leq cn+ \sum_i f(a_i)\leq cn+\sum_i 9ca_i = cn+8cn=9cn
\end{align*}
proving that $f(n)\leq 9cn$ for all $n$.

\end{proof}

\begin{algorithm}[h!]
     \KwIn{An undirected multigraph $G$ and a partition $\mathcal{P}$ of the edges of $G$ into trails.}
  \KwOut{True if $(G,\mathcal{P})$ has a strong trail orientation, otherwise false.  $G$ is modified in place, either to have such a strong trail orientation, or to a forced graph that is not strongly connected.}
  Construct a spanning tree $T$ of $G$ that contains all edges that are not at the end of their trail.

  Construct a minimal subgraph $H$ of $G$ that contains $T$ and is $2$-edge connected.

  Find the cactus $C$ of $3$-edge connected components of $H$.

  \For{each $3$-edge connected component $C_i$ in $C$ in DFS preorder}{
    Construct $G_i=\Gamma_H(C_i)$.

    Recursively compute an orientation for $G_i$.

    \If{the orientation of $G_i$ is not compatible with its DFS parent}{
      Flip orientation of $G_i$
    }
  }

  \For{each edge $e$ deleted from $G$ to create $H$}{
    \eIf{no edge on the trail of $e$ has been oriented yet}{
      Pick an arbitrary orientation for $e$.
    }{
      Set the orientation of $e$ to follow the trail.
    }
  }
  \caption{\label{alg:linear}Linear time algorithm for cubic graphs.}
\end{algorithm}
\section{Open problems}
We here mention two problems concerning trail orientations which remain open. 

First of all, our linear time algorithm for finding trail orientations only works for undirected graphs and it doesn't seem to generalise to the trail orientation problem for mixed graphs. It would be interesting to know whether there also exists a linear time algorithm working for mixed graphs. If so it would complete the picture of how fast an algorithm we can obtain for any variant of the trail orientation problem.

Secondly, our sufficient condition for when it is possible to solve the trail orientation problem for mixed multigraphs is clearly not necessary. It would be interesting to know whether there is a simple necessary and sufficient condition like there is in the undirected case. Since in the mixed case the answer to the problem actually depends on the given trail decomposition and not just on the connectivity of the graph it is harder to provide such a condition. One can give the following condition. It is possible to orient the trails making the resulting graph strongly connected if and only if when we repeatedly direct the forced trails end up with a graph satisfying our condition in theorem \ref{thm:mixed}. This condition is not simple and is not easy to check directly. Is there a more natural condition?
\newpage
\bibliographystyle{plain}
\bibliography{bibl}
\end{document}